\setlist{nolistsep}
\numberwithin{equation}{section}
\newcommand{\bra}[1]{\bigl (#1\bigr )}
\newcommand{\bbra}[1]{\left(#1\right)}
\newcommand{\brac}[1]{\bigl [#1\bigr ]}
\newcommand{\bbrac}[1]{\left[#1\right]}
\newcommand{\bbbrac}[1]{\left[#1\right]}
\newcommand{\dbra}[1]{(\!(#1)\!)}
\newcommand{\pobr}[1]{\{#1 \}}
\newcommand{\ppobr}[1]{\bigl \{#1 \bigr \}}
\newcommand{\ddual}[1]{\bigl \langle #1 \bigr \rangle}
\newcommand{\dual}[1]{\langle #1 \rangle}
\DeclareMathOperator{\Der}{Der}
\begin{document}

\allowdisplaybreaks

\newcommand{\arXivNumber}{1512.05817}

\renewcommand{\PaperNumber}{022}

\FirstPageHeading

\ShortArticleName{Hierarchies of Manakov--Santini Type by Means of Rota--Baxter and Other Identities}

\ArticleName{Hierarchies of Manakov--Santini Type\\ by Means of Rota--Baxter and Other Identities}

\Author{B{\l}a\.zej M.~SZABLIKOWSKI}

\AuthorNameForHeading{B.M.~Szablikowski}

\Address{Faculty of Physics, Adam Mickiewicz University, Umultowska 85, 61-614 Pozna\'n, Poland}
\Email{\href{mailto:bszablik@amu.edu.pl}{bszablik@amu.edu.pl}}

\ArticleDates{Received January 11, 2016, in f\/inal form February 22, 2016; Published online February 27, 2016}	

\Abstract{The Lax--Sato approach to the hierarchies of Manakov--Santini type is formalized in order to extend it to a more general class of integrable systems. For this purpose some linear operators are introduced, which must satisfy some integrability conditions, one of them is the Rota--Baxter identity. The theory is illustrated by means of the algebra of Laurent series, the related hierarchies are classif\/ied and examples, also new, of Manakov--Santini type systems are constructed, including those that are related to the dispersionless modif\/ied Kadomtsev--Petviashvili equation and
so called dispersionless $r$-th systems.}

\Keywords{Manakov--Santini hierarchy; Rota--Baxter identity; classical $r$-matrix formalism; generalized Lax hierarchies; integrable $(2+1)$-dimensional systems}

\Classification{37K10; 37K30}

\section{Introduction}

In recent years, one of the signif\/icant achievements in the theory of integrable systems was the construction of formal solutions
of the Cauchy problems for a wide class of $(2+1)$-dimensional dispersionless systems by means of the inverse scattering transform \cite{MS1,MS2,MS3,MS4,MS5}. In this process one of the crucial steps was the introduction by Manakov and Santini of a two-f\/ield system that generalizes the dispersionless Kadomtsev--Petviashvili (KP) equation. The Manakov--Santini system possesses a non-Hamiltonian Lax pair and the construction of related hierarchy \cite{MS2} within the Lax--Sato formalism \cite{B1,B2} unif\/ies two original approaches based on dif\/ferent underlying structures: f\/irst by Takasaki and Takebe \cite{TT1,TT2} and the second one by Mart\'inez Alonso and Shabat \cite{MAS1,MAS2,MAS3}. The Manakov--Santini hierarchy and its generalizations were further studied in several works, see for instance \cite{B1,B3,B2,BCC,CC,PCC}.

\looseness=1
The aim of this work is an extension of the Lax--Sato formalism of Manakov--Santini hierarchy to a more general
 class of integrable systems, in particular such as the dispersionless modif\/ied KP equation or the so-called $r$-th systems
\cite{B,BS1,BS2}. Inf\/luenced by the papers \cite{B3,B2,BCC,CC} we gene\-ralize the Lax--Sato formalism of Manakov--Santini hierarchy by means of the Lax hierarchy~\eqref{hier}, where two linear operators $\mathcal{P}$ and $\mathcal{R}$ are introduced. In Theorem~\ref{thm} we f\/ind the conditions, on the operators~$\mathcal{P}$ and~$\mathcal{R}$, for the mutual commutativity of equations from the hierarchy~\eqref{hier}. One of the conditions turns out to be the well-known Rota--Baxter identity~\cite{Bax,Rot1,Rot2}. The general source of the relations in Theorem~\ref{thm} is explained in Section~\ref{sec3}.
In fact, these relations are directly connected with those that are used in the work~\cite{Sz} for the construction of
 Frobenius manifolds in the cotangent bundles. In Section~\ref{sec4} we illustrate the above construction by means
 of the algebra of Laurent series. The related hierarchies~\eqref{hier} are classif\/ied and there are presented examples of integrable systems of Manakov--Santini type, including new ones.

\section{Generalized hierarchy}

Let $\mathbb{A}$ be a commutative associative unital algebra\footnote{We assume, for simplicity, that all structures in this work are def\/ined over the f\/ield of complex numbers.}. We def\/ine the generalized Lax hierarchy of evolution equations by
\begin{gather}\label{hier}
	\Psi_{t_n} = A_n\Psi_x - B_n\partial\Psi,\qquad \Psi = \begin{pmatrix}L\\ M\end{pmatrix},\qquad n\in\mathbb{N},
\end{gather}
where $L,M\in\mathbb{A}$ are Lax and Orlov operators (functions), respectively. The independent va\-riables are evolution
parameters (times)~$t_n$ and the spatial variable~$x$. We assume that $\partial$ is some (auxiliary) derivation in the algebra~$\mathbb{A}$ invariant with respect to all~$t_n$ and~$x$. The hierarchy is generated by the functions
\begin{gather}\label{genf}
	A_n:= \mathcal{P} \bra{J^{-1}\partial X_n}\qquad\text{and}\qquad B_n:= \mathcal{R}\bra{J^{-1}(X_n)_x},\qquad X_n:= L^n, 	
\end{gather}
where $\mathcal{P}$ and $\mathcal{R}$ are some linear maps $\mathbb{A}\rightarrow\mathbb{A}$ and
\begin{gather}\label{j}
	J:= \{L,M\}\equiv \partial L M_x - L_x \partial M.	
\end{gather}
We assume that the endomorphisms $\mathcal{P}$ and~$\mathcal{R}$ are invariant with respect to times~$t_n$ and the spatial variable~$x$, that is~$\mathcal{P}$ and~$\mathcal{R}$ commute with derivatives related to~$t_n$ and~$x$.

In all the following proofs we will skip most of straightforward computations, however we will exhibit all the crucial intermediate steps.

\begin{subequations}\label{zc}
\begin{proposition} The evolution equations from the hierarchy~\eqref{hier} pairwise commute
if the following pair of zero-curvature type equations is satisfied by the generating functions~$A_n$ and~$B_n$:
\begin{gather}\label{zc1}
		(A_n)_{t_m} - (A_m)_{t_n} + \dual{A_n, A_m}_x + B_m \partial A_n - B_n \partial A_m = 0
\end{gather}
and
\begin{gather}\label{zc2}
		(B_n)_{t_m} - (B_m)_{t_n} + A_n (B_m)_x - A_m(B_n)_x - \dual{B_n, B_m}_\partial = 0,
\end{gather}
where
\begin{gather*}
	\dual{a,b}_x:= a b_x - b a_x,\qquad \dual{a,b}_\partial:= a \partial b - b \partial a.
\end{gather*}
\end{proposition}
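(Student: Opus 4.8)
The flows pairwise commute precisely when the mixed second derivatives agree, i.e.\ when $\partial_{t_m}(\Psi_{t_n}) = \partial_{t_n}(\Psi_{t_m})$ for all $m,n$. The plan is to substitute the hierarchy equation \eqref{hier} into both sides of this identity, expand using that $\partial$ is a derivation commuting with $\partial_x$ and with each $\partial_{t_n}$, and then compare the two expressions. Since the coefficients $A_n, B_n$ lie in the commutative algebra $\mathbb{A}$ while $\Psi$ is the column vector $(L,M)^{\mathrm T}$, every product of coefficients may be freely reordered, a fact I will use repeatedly.

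Concretely, first I would write $\partial_{t_m}(\Psi_{t_n}) = \partial_{t_m}(A_n \Psi_x - B_n \partial \Psi)$ and apply the Leibniz rule. The key substitutions are $(\Psi_x)_{t_m} = \partial_x(\Psi_{t_m}) = \partial_x(A_m\Psi_x - B_m\partial\Psi)$ and $(\partial\Psi)_{t_m} = \partial(\Psi_{t_m}) = \partial(A_m\Psi_x - B_m\partial\Psi)$, the latter requiring the derivation property of $\partial$. Carrying this out and collecting terms according to the differential operator acting on $\Psi$, one obtains contributions proportional to $\Psi_{xx}$, $\partial\Psi_x$, $\partial^2\Psi$ (second order) and to $\Psi_x$, $\partial\Psi$ (first order).

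I expect the decisive simplification to come in two stages. The second-order terms should cancel automatically: the coefficient of $\Psi_{xx}$ is $A_nA_m$, that of $\partial^2\Psi$ is $B_nB_m$, and that of $\partial\Psi_x$ is $-(A_nB_m + B_nA_m)$, each of which is symmetric under $m\leftrightarrow n$ by commutativity of $\mathbb{A}$ and hence drops out of the difference $\partial_{t_m}(\Psi_{t_n}) - \partial_{t_n}(\Psi_{t_m})$. What remains is first order: the coefficient of $\Psi_x$ reduces to $(A_n)_{t_m} - (A_m)_{t_n} + \dual{A_n,A_m}_x + B_m\partial A_n - B_n\partial A_m$, which is the left-hand side of \eqref{zc1}, while the coefficient of $\partial\Psi$ reduces to the negative of the left-hand side of \eqref{zc2}. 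Both vanish by hypothesis, so the mixed derivatives agree and the flows commute.

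The main obstacle is organizational rather than conceptual: the careful bookkeeping of the Leibniz terms produced by $\partial$ acting on the products $A_m\Psi_x$ and $B_m\partial\Psi$, and the correct identification of the cross terms $A_n(A_m)_x - A_m(A_n)_x$ and $B_n\partial B_m - B_m\partial B_n$ with the brackets $\dual{A_n,A_m}_x$ and $\dual{B_n,B_m}_\partial$. Keeping the coefficients of $\Psi_x$ and of $\partial\Psi$ strictly separate throughout is what makes the two zero-curvature conditions emerge cleanly as the coefficients of these two independent operators.
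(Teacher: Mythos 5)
Your proposal is correct and follows exactly the paper's own argument: substitute the hierarchy equations into the mixed derivatives $(\Psi_{t_n})_{t_m}=(\Psi_{t_m})_{t_n}$ and compare coefficients of $\Psi_x$ and $\partial\Psi$, the second-order terms cancelling by symmetry. The paper compresses this into two lines, while you have supplied the bookkeeping (the symmetric coefficients $A_nA_m$, $B_nB_m$, $-(A_nB_m+B_nA_m)$ of $\Psi_{xx}$, $\partial^2\Psi$, $\partial\Psi_x$), but the route is identical.
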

\end{subequations}

\begin{proof}
The commuting of the respective f\/lows means that $(\Psi_{t_n})_{t_m} = (\Psi_{t_m})_{t_n}$.
Comparing the coef\/f\/icients of both sides with respect to the independent variables $\Psi_x$
and $\partial \Psi$ we obtain the required pair of zero-curvature conditions.
\end{proof}

\begin{theorem}\label{thm}
The following set of conditions on the endomorphisms $\mathcal{P}$ and $\mathcal{R}$:
\begin{subequations}\label{sp}
\begin{gather}
\label{spa} \mathcal{P}\bra{\mathcal{P}(a)b + a \mathcal{P}(b)} - \mathcal{P}(a)\mathcal{P}(b) = \kappa_1 \mathcal{P}(ab),\\
\label{spb} \mathcal{P}\bra{\mathcal{R}(a)\partial b + a \partial \mathcal{P}(b)} - \mathcal{R}(a)\partial \mathcal{P}(b) = \kappa_1 \mathcal{P}(a \partial b),\\
\label{spc} \mathcal{R}\bra{\mathcal{P}(a)b + a \mathcal{R}(b)} - \mathcal{P}(a)\mathcal{R}(b) = \kappa_2 \mathcal{R}(ab),\\
\label{spd} \mathcal{R}\bra{\mathcal{R}(a)\partial b + a \partial \mathcal{R}(b)} - \mathcal{R}(a)\partial \mathcal{R}(b) = \kappa_2 \mathcal{R}(a \partial b),
\end{gather}
\end{subequations}
where $a,b\in\mathbb{A}$ and $\kappa_1,\kappa_2\in\mathbb{C}$, is sufficient for the zero-curvature equations~\eqref{zc} to be identically fulfilled
by the generating functions~$A_n$ and~$B_n$. This means that the above equations are sufficient conditions
for mutual commutation of the flows from the hierarchy~\eqref{hier}.
\end{theorem}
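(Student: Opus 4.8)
The plan is to take the reduction supplied by the Proposition as given, so that it suffices to verify that the two zero-curvature relations~\eqref{zc1} and~\eqref{zc2} hold identically once the endomorphism identities~\eqref{sp} are imposed. First I would record the evolution laws forced by the hierarchy~\eqref{hier} itself. Reading off $L_{t_m}=A_mL_x-B_m\partial L$ and $M_{t_m}=A_mM_x-B_m\partial M$ and using the chain rule, every power $X_n=L^n$, being a function of $L$ alone, obeys $(X_n)_{t_m}=A_m(X_n)_x-B_m\partial X_n$. A short computation, using only that $\partial$ and $\partial_x$ commute and are time independent, then yields the conservative form
\[
  J_{t_m}=(A_mJ)_x-\partial(B_mJ)
\]
for the Jacobian~\eqref{j}; this is essentially the only fact about $J$ on which the whole argument rests.

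Next I would differentiate the generating densities $a_n:=J^{-1}\partial X_n$ and $b_n:=J^{-1}(X_n)_x$, so that $A_n=\mathcal{P}(a_n)$ and $B_n=\mathcal{R}(b_n)$. Substituting the two evolution laws into $(a_n)_{t_m}=(J^{-1})_{t_m}\partial X_n+J^{-1}\partial(X_n)_{t_m}$, and into the analogous expression for $b_n$, a large number of terms proportional to $\partial J/J$ and $J_x/J$ cancel, leaving the clean formulas
\[
  (a_n)_{t_m}=A_m(a_n)_x-B_m\partial a_n+b_n\,\partial A_m-a_n(A_m)_x
\]
together with $(b_n)_{t_m}=A_m(b_n)_x-B_m\partial b_n+b_n\,\partial B_m-a_n(B_m)_x$. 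Applying $\mathcal{P}$ and $\mathcal{R}$ and inserting $A_m=\mathcal{P}(a_m)$, $B_m=\mathcal{R}(b_m)$ then expresses $(A_n)_{t_m}$ and $(B_n)_{t_m}$ entirely through nested images of $\mathcal{P}$ and $\mathcal{R}$.

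The heart of the proof is to feed these into~\eqref{zc1} and~\eqref{zc2} and to collapse the nested operator terms by means of~\eqref{sp}. For~\eqref{zc1} I would split the result into a purely $\mathcal{P}$, $\partial_x$ part and a mixed $\mathcal{R}$, $\partial$ part. The mixed part is handled directly by~\eqref{spb}, applied at $(a,b)=(b_n,a_m)$ and $(b_m,a_n)$: all terms of the shape $\mathcal{R}(b)\,\partial\mathcal{P}(a)$ cancel and only $\kappa_1\mathcal{P}(b_n\partial a_m-b_m\partial a_n)$ survives. For the $\partial_x$ part I would first derive, by using~\eqref{spa} at $(a,b_x)$ and at $(a_x,b)$ and subtracting, the antisymmetrised consequence $\mathcal{P}\bra{\dual{\mathcal{P}(a),b}_x+\dual{a,\mathcal{P}(b)}_x}-\dual{\mathcal{P}(a),\mathcal{P}(b)}_x=\kappa_1\mathcal{P}(\dual{a,b}_x)$, whose value at $(a_m,a_n)$ leaves $\kappa_1\mathcal{P}(\dual{a_m,a_n}_x)$. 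Hence~\eqref{zc1} reduces to $\kappa_1\mathcal{P}\bra{a_m(a_n)_x-a_n(a_m)_x+b_n\partial a_m-b_m\partial a_n}$. The treatment of~\eqref{zc2} is entirely parallel, using~\eqref{spd} for the $\partial$ part and the differentiated antisymmetric form of~\eqref{spc} (built from its values at $(a_m,b_n)$ and $(a_n,b_m)$) for the $\partial_x$ part, reducing~\eqref{zc2} to $\kappa_2\mathcal{R}\bra{a_m(b_n)_x-a_n(b_m)_x+b_n\partial b_m-b_m\partial b_n}$.

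It then remains to show that the two arguments of $\mathcal{P}$ and $\mathcal{R}$ vanish identically. Writing $\partial X_k=X_k'\,\partial L$ and $(X_k)_x=X_k'\,L_x$ with $X_k'=\mathrm{d}X_k/\mathrm{d}L$, and substituting $a_k=J^{-1}\partial X_k$, $b_k=J^{-1}(X_k)_x$, each argument becomes a sum of a piece proportional to $X_m'X_n''-X_m''X_n'$, whose two contributions cancel against each other, and a piece proportional to the Wronskian $\partial X_m\,(X_n)_x-\partial X_n\,(X_m)_x$, which itself equals $X_m'X_n'\bra{\partial L\,L_x-L_x\,\partial L}=0$ by commutativity of $\mathbb{A}$. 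Both arguments are therefore zero, so~\eqref{zc1} and~\eqref{zc2} hold identically and the flows commute. I expect the main obstacle to be the bookkeeping of the third step: matching the many terms arising from $(A_n)_{t_m}-(A_m)_{t_n}$ against the precise bilinear shapes on the left of~\eqref{sp}, and in particular isolating the differentiated antisymmetric consequences of~\eqref{spa} and~\eqref{spc} in exactly the form the cancellation demands.
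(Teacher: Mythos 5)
Your proposal is correct and takes essentially the same approach as the paper's proof: the same conservative law $J_{t_m}=(A_mJ)_x-\partial(B_mJ)$, the same evolution formulas for $a_n=J^{-1}\partial X_n$ and $b_n=J^{-1}(X_n)_x$, and the same application of the four identities (with $\mathcal{P}$, $\mathcal{R}$ commuting with $\partial_x$) to collapse the zero-curvature expressions to $\kappa_1\mathcal{P}(\cdots)$ and $\kappa_2\mathcal{R}(\cdots)$ acting on the two residual combinations. The only difference is cosmetic and occurs at the last step: you verify the vanishing of those residuals by direct chain-rule expansion, whereas the paper recognizes them as $J^{-1}\partial\bigl(J^{-1}\{X_m,X_n\}\bigr)$ and $J^{-1}\bigl(J^{-1}\{X_m,X_n\}\bigr)_x$, which vanish because $\{X_m,X_n\}=0$ for powers of the same $L$.
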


\begin{proof}
Let's def\/ine
\begin{gather*}
	a_n := J^{-1}\partial X_n\qquad\text{and}\qquad	b_n := J^{-1}(X_n)_x,
\end{gather*}
so that $A_n = \mathcal{P} (a_n)$ and $B_n = \mathcal{R} (b_n)$. Using \eqref{hier} one f\/inds that
\begin{gather*}
	(X_n)_{t_m} = A_m (X_n)_x - B_m \partial X_n
\end{gather*}
and
\begin{gather*}
	J_{t_n} = (A_n J)_x - \partial (B_nJ)\quad\Rightarrow\quad
	\big(J^{-1}\big)_{t_n} = \dual{A_n,J^{-1}}_x - \dual{B_n,J^{-1}}_\partial.
\end{gather*}
Hence,
\begin{gather*}
	(a_n)_{t_m} = \dual{A_m, a_n}_x + b_n \partial A_m - B_m \partial a_n,\qquad
	(b_n)_{t_m} = A_m (b_n)_x - a_n (B_m)_x - \dual{B_m, b_n}_\partial .
\end{gather*}
Now substituting $(A_n)_{t_m} \equiv \mathcal{R}((a_n)_{t_m})$ and $(B_n)_{t_m} \equiv \mathcal{R}((a_n)_{t_m})$
into the pair of zero-curvature equations~\eqref{zc} and taking advantage of the relations~\eqref{sp} one f\/inds that
\begin{gather*}
 (A_n)_{t_m} + A_n (A_m)_x + B_m \partial A_n - n\leftrightarrow m \\
\qquad{} = \mathcal{P}\bra{A_m (a_n)_x + a_m (A_n)_x} - A_m (A_n)_x
- \mathcal{P}\bra{B_m \partial a_n + b_m \partial A_n} + B_m \partial A_n - n\leftrightarrow m\\
\qquad{}= \kappa_1 \mathcal{P}\bra{a_m (a_n)_x - a_n (a_m)_x - b_m \partial a_n + b_n \partial a_m}\\
\qquad{}= \kappa_1 \mathcal{P}\bra{J^{-1} \partial (J^{-1}\pobr{X_m,X_n})} = 0
\end{gather*}
and
\begin{gather*}
(B_n)_{t_m} - A_m (B_n)_x - B_n \partial B_m - n\leftrightarrow m \\
\qquad{}= \mathcal{R}\bra{A_m (b_n)_x + a_m (B_n)_x} - A_m (B_n)_x
- \mathcal{R}\bra{B_m \partial b_n + b_m \partial B_n} + B_m \partial B_n - n\leftrightarrow m\\
\qquad{}= \kappa_2 \mathcal{R}\bra{a_m (b_n)_x - a_n (b_m)_x - b_m \partial b_n + b_n \partial b_m}\\
\qquad{}= \kappa_2 \mathcal{R}\bra{J^{-1} (J^{-1}\pobr{X_m,X_n})_x} = 0.
\end{gather*}
In the above $n\leftrightarrow m$ stands for all the remaining terms arising from the permutation of the indices $m$
and~$n$ in the preceding terms. Since~$\pobr{X_m,X_n}=0$ we see that the zero-curvature equations~\eqref{zc} are indeed satisf\/ied identically.
\end{proof}

\begin{lemma}\label{lemma}
Assume that the endomorphisms $\mathcal{P}$ and $\mathcal{R}$ satisfy the following relation
\begin{gather}\label{rel}
		\mathcal{P}(\partial a) = \partial \mathcal{R}(a)
\end{gather}
 for arbitrary $a\in\mathbb{A}$. Then, under the constraint
 \begin{gather}\label{con}
 			J\equiv \pobr{L,M} = L^s,
\end{gather}
 where $s$ is some fixed integer, the generalized hierarchy~\eqref{hier} reduces to the standard hierarchy
 of the form
\begin{gather}\label{shier}
 			L_{t_n} = \frac{n}{n-s}\pobr{\mathcal{R}(L^{n-s}),L},
\end{gather}
 where the Poisson bracket is defined through the formula~\eqref{j}, that is $\pobr{,} := \partial\wedge \partial_x$.

On the other hand, if we assume that $L$ is invariant with respect to times~$t_n$ and the variable~$x$, that is
$L_{t_n}=L_x=0$, then the hierarchy reduces to
\begin{gather}\label{rh2}
			M_{t_n} = n \mathcal{P}\bra{M_x^{-1}L^{n-1}}M_x.
\end{gather}
 \end{lemma}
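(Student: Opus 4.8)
The plan is to treat both reductions as direct substitutions into the component form of the hierarchy~\eqref{hier}, namely $L_{t_n} = A_n L_x - B_n\partial L$ and $M_{t_n} = A_n M_x - B_n\partial M$, and to simplify the generating functions $A_n = \mathcal{P}\bra{J^{-1}\partial X_n}$ and $B_n = \mathcal{R}\bra{J^{-1}(X_n)_x}$ under each hypothesis. The starting computations are the same in both cases: since $X_n = L^n$, one has $\partial X_n = n L^{n-1}\partial L$ and $(X_n)_x = n L^{n-1}L_x$.

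For the first reduction I would insert the constraint~\eqref{con}, $J = L^s$, so that $J^{-1}\partial X_n = n L^{n-s-1}\partial L = \frac{n}{n-s}\partial\bra{L^{n-s}}$ and likewise $J^{-1}(X_n)_x = \frac{n}{n-s}\bra{L^{n-s}}_x$; the essential point is that the factors $L^{n-1}$ and $L^{-s}$ combine into a total derivative. Applying relation~\eqref{rel}, $\mathcal{P}(\partial a) = \partial\mathcal{R}(a)$, to the first expression, and the invariance of $\mathcal{R}$ under $\partial_x$ to the second, yields $A_n = \frac{n}{n-s}\partial\mathcal{R}\bra{L^{n-s}}$ and $B_n = \frac{n}{n-s}\bra{\mathcal{R}(L^{n-s})}_x$. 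Both now involve $L$ alone, so the $L$-flow closes. Substituting into $L_{t_n} = A_n L_x - B_n\partial L$ and writing $f := \mathcal{R}\bra{L^{n-s}}$, the right-hand side becomes $\frac{n}{n-s}\bra{\partial f\cdot L_x - f_x\cdot\partial L} = \frac{n}{n-s}\pobr{f,L}$, which is precisely~\eqref{shier}.

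For the second reduction I would set $L_{t_n} = L_x = 0$. Then $(X_n)_x = n L^{n-1}L_x = 0$ forces $B_n = \mathcal{R}(0) = 0$, while $J = \pobr{L,M}$ collapses to $\partial L\cdot M_x$ because the term $L_x\,\partial M$ vanishes. Using commutativity of $\mathbb{A}$ one gets $J^{-1}\partial X_n = (\partial L)^{-1}M_x^{-1}\cdot n L^{n-1}\partial L = n M_x^{-1}L^{n-1}$, hence $A_n = n\mathcal{P}\bra{M_x^{-1}L^{n-1}}$. The $L$-equation $L_{t_n} = A_n L_x - B_n\partial L$ then holds trivially, whereas the $M$-equation gives $M_{t_n} = A_n M_x - B_n\partial M = n\mathcal{P}\bra{M_x^{-1}L^{n-1}}M_x$, which is~\eqref{rh2}.

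I expect no genuine obstacle, as the argument is purely computational. The only points requiring care are recognizing that the powers of $L$ assemble into the total derivative $\partial\bra{L^{n-s}}$ so that~\eqref{rel} may be invoked, and exploiting the commutativity of $\mathbb{A}$ to cancel the $\partial L$ factors in the second part. A useful consistency observation is that in each case the equation not displayed in the lemma is automatically compatible: in the first reduction the $L$-flow decouples from $M$, and in the second the vanishing of $B_n$ renders the $L$-equation an identity.
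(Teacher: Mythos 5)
Your proposal is correct and follows essentially the same route as the paper's own proof: in both cases the powers of $L$ are assembled into the total derivative $\partial\bra{L^{n-s}}$ so that relation~\eqref{rel} converts $\mathcal{P}$ into $\partial\mathcal{R}$, the commutativity of $\mathcal{R}$ with $\partial_x$ handles $B_n$, and the second reduction proceeds by the identical observation that $B_n=0$ and $J=\partial L\, M_x$. The only cosmetic difference is that the paper writes the reduced flow at once for the full vector $\Psi=(L,M)^{\rm T}$ as $\frac{n}{n-s}\pobr{\mathcal{R}(L^{n-s}),\Psi}$, noting the $M$-equation is consistent with the constraint, whereas you extract the $L$-component directly; this is immaterial.
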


\begin{proof}
Taking advantage of the constraint \eqref{con} and the relation \eqref{rel} we have
\begin{gather*}
		A_ n = \mathcal{P} \bra{J^{-1}\partial X_n} = \frac{n}{n-r}\mathcal{P}\bra{\partial L^{n-r}} = \frac{n}{n-r}\partial \mathcal{R} (L^{n-r})
\end{gather*}
and
\begin{gather*}
		B_n = \mathcal{R}\bra{J^{-1}(X_n)_x} = \frac{n}{n-r}\mathcal{R}\bra{(L^{n-r})_x} = \frac{n}{n-r}\bra{\mathcal{R}(L^{n-r})}_x.
\end{gather*}
The last relation follows from the assumption that $\mathcal{R}$ commutes with $\partial_x$.
As result the hierarchy~\eqref{hier} reduces into
\begin{gather*}
	\Psi_{t_n} = \frac{n}{n-r}\brac{\partial \mathcal{R} (L^{n-r}) \Psi_x - \bra{\mathcal{R}(L^{n-r})}_x \partial\Psi}
	\equiv \frac{n}{n-r}\pobr{\mathcal{R}(L^{n-r}),\Psi},
\end{gather*}
where $\Psi = (L, M)^{\rm T}$. The f\/irst equation on~$L$ coincides with~\eqref{shier}. The second equation
gives evolution of $M$ consistent with the constraint~\eqref{con}.

If $L_{t_n}=L_x=0$, then $B_n=0$ and $J = \partial L M_x$. Hence
\begin{gather*}
		A_ n = \mathcal{P} \bra{J^{-1}\partial X_n} = n\mathcal{P}\bra{M_x^{-1}L^{n-1}} .
\end{gather*}
In this case, the f\/irst equation in~\eqref{hier} for $\Psi = L$ is satisf\/ied identically and the second equation
for $\Psi= M$ takes the form~\eqref{rh2}.
\end{proof}

\section{Rota--Baxter and other identities}\label{sec3}

Consider some algebra $(\mathbb{A},\cdot)$. The Rota--Baxter identity \cite{Bax,Rot1,Rot2} for some linear operator
$\mathcal{P}\colon \mathbb{A}\rightarrow\mathbb{A}$ has the form
\begin{gather}\label{RB}
	\mathcal{P}(\mathcal{P}(a)\cdot b + a\cdot \mathcal{P}(b)) - \mathcal{P}(a)\cdot \mathcal{P}(b) = \kappa \mathcal{P}(a\cdot b),
\end{gather}
where $a, b, c\in\mathbb{A}$ and $\kappa$ is some f\/ixed scalar weight. Alternatively we can write the identity as
\begin{gather*}
	\ell(\ell(a)\cdot b + a\cdot \ell(b)) - \ell(a)\cdot \ell(b) = \frac{1}{4}\kappa^2 a\cdot b,
\end{gather*}
where $\ell:=\mathcal{P} - \frac{1}{2}\kappa$. For an operator $\mathcal{P}$ satisfying~\eqref{RB} there is always associated operator
$\mathcal{P}' := \kappa - \mathcal{P}$, which satisf\/ies the identity~\eqref{RB} for the same weight $\kappa$.

There is a source of simple solutions to the Rota--Baxter identity \eqref{RB}. Assume that the algebra~$\mathbb{A}$
can be decomposed into direct sum of subalgebras, that is
\begin{gather*}
	\mathbb{A} = \mathbb{A}_+\oplus\mathbb{A}_-,\qquad \mathbb{A}_\pm\cdot\mathbb{A}_\pm\subset\mathbb{A}_\pm,\qquad
\mathbb{A}_+\cap\mathbb{A}_-= \pobr{0} . 	
\end{gather*}
Then, the projections $P_+$ and $P_-$ on the subalgebras $\mathbb{A}_+$ and $\mathbb{A}_-$ satisfy the identity~\eqref{RB}
for the weight $\kappa=1$. Notice that $P_+ + P_- = 1$.

The main feature of the Rota--Baxter identity is that in the case of associative algebra~$\mathbb{A}$ the identity \eqref{RB} is a~suf\/f\/icient condition for associativity of another multiplication in~$\mathbb{A}$ given by
\begin{gather*}
	a*_\mathcal{P} b := \mathcal{P}(a)\cdot b + a\cdot \mathcal{P}(b) - \kappa a\cdot b
	 \equiv \ell(a)\cdot b + a\cdot \ell(b).
\end{gather*}
For more information on the Rota--Baxter algebras see \cite{G,G2}.

The special case of the Rota--Baxter identity~\eqref{RB}, for a Lie algebra $(\mathbb{A},[\cdot,\cdot])$, is the modif\/ied Yang--Baxter equation
\begin{gather}\label{YB}
	\mathcal{R}\bra{[\mathcal{R}(a),b] + [a,\mathcal{R}(b)]} - [\mathcal{R}(a),\mathcal{R}(b)] = \kappa \mathcal{R}\bra{[a,b]}.
\end{gather}
The equation \eqref{YB} is a suf\/f\/icient condition for the bracket
\begin{gather*}
	[a,b]_\mathcal{R} := [\mathcal{R}(a), b] + [a, \mathcal{R}(b)] - \kappa [a,b] 	
\end{gather*}
to def\/ine second Lie bracket in $\mathbb{A}$, in this case the linear map $R$ is called the classical $r$-matrix. The more standard convention is to consider the endomorphism $r:=\mathcal{R} - \frac{1}{2}\kappa$ instead of~$\mathcal{R}$. The classical $r$-matrix formalism~\cite{Sem2, Sem1} is known to be very useful in the construction of very broad classes of integrable systems, see also the survey~\cite{BS3} and references therein.

Let's now consider commutative and associative algebra $(\mathbb{A},\cdot)$. We def\/ine Poisson bracket in~$\mathbb{A}$
be means of two commuting derivations $\partial,\partial_x\in\Der\mathbb{A}$:
\begin{gather}\label{pobr}
	\pobr{a,b}:= \partial a\cdot \partial_x b - \partial_x a\cdot \partial b.
\end{gather}
Here, the derivation $\partial_x$ is a counterpart of the derivative with respect to a spatial variable $x$ and the derivation $\partial$
is a counterpart of the derivative with respect to some auxiliary variable.
The following identity on the endomorphism $\mathcal{R}$ turns out to be important:
\begin{gather}\label{cond}
\mathcal{R}\bra{\mathcal{R}(a) \partial b} + \mathcal{R}\bra{a\partial \mathcal{R}(b)} - \mathcal{R}(a) \partial \mathcal{R}(b) = \kappa \mathcal{R}(a \partial b),
\end{gather}
where $\kappa$ is some constant. Analogically as before, $R'= \kappa - R$ solves \eqref{cond} for the same weight~$\kappa$.

\begin{proposition}
Let us assume that the endomorphism $\mathcal{R}$ commutes with the derivation $\partial_x$, that is $\mathcal{R} \partial_x = \partial_x \mathcal{R}$.
Then, the identity \eqref{cond} is a sufficient condition for $\mathcal{R}$ to satisfy the modified Yang--Baxter equation
\begin{gather}\label{cond2}
\mathcal{R}\bra{\pobr{\mathcal{R}(a), b} + \pobr{a,\mathcal{R}(b)}} - \pobr{\mathcal{R}(a),\mathcal{R}(b)} = \kappa \mathcal{R}\bra{\pobr{a,b}},
\end{gather}
and so to be a classical $r$-matrix for $\mathcal{R}$ with respect to the Poisson bracket \eqref{pobr}. This means that
\begin{gather*}
	\pobr{a,b}_\mathcal{R} := \pobr{\mathcal{R}(a), b} + \pobr{a,\mathcal{R}(b)} - \kappa \pobr{a,b}
\end{gather*}
is a Lie bracket when the relation \eqref{cond} holds.		
\end{proposition}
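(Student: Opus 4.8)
The plan is to establish \eqref{cond2} by direct expansion and then to recognize the outcome as two disjoint instances of the hypothesis \eqref{cond}. First I would substitute the definition \eqref{pobr} of the Poisson bracket into the left-hand side of \eqref{cond2}, replacing each occurrence of $\pobr{\cdot,\cdot}$ by $\partial(\cdot)\,\partial_x(\cdot) - \partial_x(\cdot)\,\partial(\cdot)$. After distributing the outer $\mathcal{R}$ this produces exactly eight terms: four of the form $\mathcal{R}(\dots)$ coming from $\mathcal{R}\bra{\pobr{\mathcal{R}(a),b} + \pobr{a,\mathcal{R}(b)}}$, two explicit products from $-\pobr{\mathcal{R}(a),\mathcal{R}(b)}$, and two more from $-\kappa\mathcal{R}\bra{\pobr{a,b}}$.

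The decisive preliminary step is to use the commutation hypothesis $\mathcal{R}\partial_x=\partial_x\mathcal{R}$ to pull $\partial_x$ through $\mathcal{R}$ everywhere, writing $\partial_x\mathcal{R}(c)=\mathcal{R}(\partial_x c)$. Once this is done the expression depends on $a$ only through $\partial a$ and $\partial_x a$, and on $b$ only through $\partial b$ and $\partial_x b$, with every $\partial_x a$ and $\partial_x b$ now appearing as an ordinary algebra element rather than as a derivative of some $\mathcal{R}(\cdot)$. Since $\mathbb{A}$ is commutative I may then reorder all products at will. I would next sort the eight terms into two groups of four according to whether they carry a factor $\partial_x a$ or a factor $\partial_x b$.

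Setting $c:=\partial_x a$, the four terms carrying a factor $\partial_x a$ assemble into
\begin{gather*}
-\bbrac{\mathcal{R}\bra{\mathcal{R}(c)\partial b} + \mathcal{R}\bra{c\,\partial\mathcal{R}(b)} - \mathcal{R}(c)\,\partial\mathcal{R}(b)} + \kappa\mathcal{R}(c\,\partial b),
\end{gather*}
which vanishes by \eqref{cond} applied with first argument $c=\partial_x a$ and second argument $b$. Symmetrically, setting $d:=\partial_x b$ and reordering products by commutativity, the four terms carrying a factor $\partial_x b$ assemble into
\begin{gather*}
\mathcal{R}\bra{\mathcal{R}(d)\partial a} + \mathcal{R}\bra{d\,\partial\mathcal{R}(a)} - \mathcal{R}(d)\,\partial\mathcal{R}(a) - \kappa\mathcal{R}(d\,\partial a),
\end{gather*}
which vanishes by \eqref{cond} applied with first argument $d=\partial_x b$ and second argument $a$. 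Hence the whole left-hand side of \eqref{cond2} is identically zero, i.e.\ \eqref{cond2} holds.

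I expect the only genuine difficulty to be organizational rather than conceptual. The identity \eqref{cond} is asymmetric: it privileges the single derivation $\partial$, whereas the Poisson bracket \eqref{pobr} is an antisymmetric combination of $\partial$ and $\partial_x$. The key observation reconciling the two is precisely the hypothesis $\mathcal{R}\partial_x=\partial_x\mathcal{R}$, which allows each $\partial_x$-derivative to be pulled outside $\mathcal{R}$ so that $\partial_x a$ and $\partial_x b$ can occupy the \emph{first} slot of \eqref{cond} while $\partial$ supplies the derivative in the \emph{second} slot; the two admissible choices of which variable plays this role are exactly what furnishes the two groups above. Careful tracking of signs and of the reorderings permitted by commutativity of $\mathbb{A}$ is where the bookkeeping must be done exactly. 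Finally, once \eqref{cond2} is in hand, the assertion that $\pobr{a,b}_\mathcal{R}$ is a Lie bracket is immediate from the standard $r$-matrix theory recalled around \eqref{YB}: the $\mathcal{R}$-bracket is visibly bilinear and skew-symmetric, and \eqref{cond2} is the modified Yang--Baxter equation guaranteeing the Jacobi identity.
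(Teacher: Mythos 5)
Your proof is correct and is precisely the ``straightforward expansion'' that the paper itself invokes as its entire proof: expand the brackets via \eqref{pobr}, use $\mathcal{R}\partial_x=\partial_x\mathcal{R}$ to move $\partial_x$ inside $\mathcal{R}$, and observe that the eight terms split into two copies of \eqref{cond} with arguments $(\partial_x a, b)$ and $(\partial_x b, a)$. The grouping, signs, and the concluding appeal to the modified Yang--Baxter equation for the Jacobi identity all check out, so you have simply supplied the details the paper leaves to the reader.
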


The proof is straightforward expanding the formula \eqref{cond} by means of~\eqref{pobr}. In fact, when~$\mathcal{R}$
also commutes with the derivation $\partial$, the Rota--Baxter identity~\eqref{RB} is suf\/f\/icient for $\mathcal{R}$ to solve the equation~\eqref{cond}. However, in general~\eqref{RB} is more restrictive than~\eqref{cond}.

\begin{proposition}
Assume that, on a commutative associative algebra $(\mathbb{A},\cdot)$, there exists non-degenerate bilinear
product $(\cdot,\cdot)_\mathbb{A}\colon \mathbb{A}\times\mathbb{A}\rightarrow\mathbb{C}$, such that the Frobenius condition holds:
\begin{gather*}
(a\cdot b,c)_\mathbb{A} = (a,b\cdot c)_\mathbb{A}
\end{gather*}
and the product is invariant with respect to the derivation $\partial\in\Der\mathbb{A}$:
\begin{gather*}
(\partial a, b)_\mathbb{A} + (a,\partial b)_\mathbb{A} = 0 .
\end{gather*}
Then, the Rota--Baxter identity \eqref{RB} is equivalent to the `dual' relation:
\begin{gather*}
	\mathcal{R}\bra{\mathcal{P}(a)b + a \mathcal{R}(b)} - \mathcal{P}(a)\mathcal{R}(b) = \kappa \mathcal{R}(ab),\qquad \mathcal{R}:= \kappa - \mathcal{P}^*,	\end{gather*}
where $(\mathcal{P}^* a, b)_\mathbb{A} := (a,\mathcal{P} b)_\mathbb{A}$.

Let additionally assume that the following relation is valid:
\begin{gather}\label{rr}
	\mathcal{P}(\partial a) = \partial \mathcal{R}(a).		
\end{gather}	
Then, from the Rota--Baxter identity~\eqref{RB} we obtain the relation~\eqref{cond}
and
\begin{gather*}
	\mathcal{P}\bra{\mathcal{R}(a)\partial b + a \partial \mathcal{P}(b)} - \mathcal{R}(a)\partial \mathcal{P}(b) = \kappa \mathcal{P}(a \partial b).	
\end{gather*}
\end{proposition}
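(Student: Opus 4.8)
The plan is to exploit the Frobenius pairing $(\cdot,\cdot)_\mathbb{A}$ to transfer the Rota--Baxter identity \eqref{RB} for $\mathcal{P}$ into statements about $\mathcal{R}=\kappa-\mathcal{P}^*$ by taking the inner product of both sides against an arbitrary third element $c\in\mathbb{A}$ and repeatedly moving operators across the pairing via $(\mathcal{P}^*a,b)_\mathbb{A}=(a,\mathcal{P} b)_\mathbb{A}$ and the Frobenius property $(a\cdot b,c)_\mathbb{A}=(a,b\cdot c)_\mathbb{A}$. First I would establish the claimed equivalence with the `dual' relation: pairing \eqref{RB} with $c$, the term $\mathcal{P}(\mathcal{P}(a)b+a\mathcal{P}(b))$ pairs as $(\mathcal{P}(a)b+a\mathcal{P}(b),\mathcal{P}^*c)_\mathbb{A}$, and using the Frobenius condition to regroup the products one isolates, after substituting $\mathcal{P}^*=\kappa-\mathcal{R}$, exactly the expression $\mathcal{R}(\mathcal{P}(a)b+a\mathcal{R}(b))-\mathcal{P}(a)\mathcal{R}(b)-\kappa\mathcal{R}(ab)$ tested against $c$. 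Since $(\cdot,\cdot)_\mathbb{A}$ is non-degenerate and $c$ is arbitrary, vanishing of the paired quantity is equivalent to the operator identity itself, giving the stated equivalence in both directions.

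Next, assuming additionally the compatibility relation \eqref{rr}, I would derive the two differential identities. The key observation is that $\partial$-invariance of the pairing, $(\partial a,b)_\mathbb{A}+(a,\partial b)_\mathbb{A}=0$, makes $\partial$ skew-adjoint, so it interacts predictably with $\mathcal{P}^*$: differentiating or pairing against $\partial c$ lets me convert a factor $\partial b$ into an adjoint action. Concretely, to obtain \eqref{cond} I would start from the dual relation just proved, specialize or pair it so that one of the arguments carries a derivation, and use $\mathcal{P}(\partial a)=\partial\mathcal{R}(a)$ to replace occurrences of $\mathcal{P}\partial$ by $\partial\mathcal{R}$ (and, through the adjoint, $\mathcal{P}^*\partial$-type terms). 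The relation \eqref{rr} is precisely what allows the mixed $\mathcal{P}$--$\mathcal{R}$ cross-terms to collapse into the single-operator form appearing in \eqref{cond}. The second displayed identity, involving $\mathcal{P}(\mathcal{R}(a)\partial b+a\partial\mathcal{P}(b))-\mathcal{R}(a)\partial\mathcal{P}(b)$, is obtained by the dual manoeuvre: apply the same pairing argument to the original Rota--Baxter identity \eqref{RB} but with $\partial b$ in place of $b$, then push $\partial$ through using \eqref{rr} and skew-adjointness to rewrite $\mathcal{P}\partial$ as $\partial\mathcal{R}$ where needed.

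I expect the main obstacle to be the careful bookkeeping of where the derivation $\partial$ lands after each transposition across the pairing. Because $\partial$ is skew-adjoint while $\mathcal{P}$ and $\mathcal{P}^*$ are adjoint to one another, a term such as $(a\partial\mathcal{P}(b),c)_\mathbb{A}$ can be rewritten in several equivalent ways, and the relation \eqref{rr} must be invoked at exactly the right moment to keep all operators acting on the same slot. The risk is generating spurious cross-terms that only cancel after a further use of \eqref{rr} or the Frobenius identity; the discipline is to commit to a fixed order — first transpose $\mathcal{P}$ via the pairing, then apply \eqref{rr} to trade $\mathcal{P}\partial\leftrightarrow\partial\mathcal{R}$, then use skew-adjointness of $\partial$ — and to verify that the leftover terms reorganize into the target right-hand side. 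Once the substitution pattern is fixed the computation is routine, so I would record only the intermediate expressions after each transposition rather than every algebraic step.
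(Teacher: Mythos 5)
The paper itself offers no proof of this proposition — it defers entirely to Proposition~3.3 and Lemma~5.3 of~\cite{Sz} — so your attempt must stand on its own. Your first step is essentially correct: pairing \eqref{RB} against $c$, transposing via $(\mathcal{P}^* a,b)_\mathbb{A}=(a,\mathcal{P} b)_\mathbb{A}$ and regrouping with the Frobenius condition does give the equivalence with the dual relation, with one bookkeeping remark that matters later: what you actually isolate is the dual relation in the variables $(a,c)$ tested against $b$ — the second and third slots trade places under transposition — not the dual relation in $(a,b)$ tested against $c$; since all entries are arbitrary, non-degeneracy still yields the equivalence in both directions. Your derivation of \eqref{cond} is also right provided the derivation is inserted into the slot on which $\mathcal{P}$ acts: in the dual relation substitute $a\to\partial a$ (and rename), because \eqref{rr} rewrites $\mathcal{P}\partial$ as $\partial\mathcal{R}$ but gives no way to handle a term $\mathcal{R}(\partial b)$, which is what appears if you substitute into the second slot. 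Done this way, \eqref{cond} follows by pure substitution, with no pairing needed.

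The genuine gap is your route to the final identity (i.e., \eqref{spb} with $\kappa_1=\kappa$). Substituting $\partial b$ for $b$ in \eqref{RB} and applying \eqref{rr} gives
\begin{gather*}
\mathcal{P}\bra{\mathcal{P}(a)\partial b + a\partial\mathcal{R}(b)} - \mathcal{P}(a)\partial\mathcal{R}(b) = \kappa\mathcal{P}(a\partial b),
\end{gather*}
which carries $\mathcal{P}(a)$ and $\partial\mathcal{R}(b)$ where the target needs $\mathcal{R}(a)$ and $\partial\mathcal{P}(b)$; no relabelling fixes this, since \eqref{rr} cannot convert $\partial\mathcal{P}$ or $\mathcal{R}\partial$. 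Dualizing this identity does not rescue the plan — this is exactly where the slot-swapping bookkeeping bites: transposing everything onto the $b$-slot produces only $\partial$ applied to the dual relation, which is a vacuous consequence of what you already have, while transposing onto the $a$-slot reproduces \eqref{cond}; in neither case do you reach \eqref{spb}. What works is to dualize \eqref{cond}, not \eqref{RB}: pair the target identity against $c$, move every operator onto $b$ using the adjoint, Frobenius regrouping, skew-adjointness of $\partial$ and the Leibniz rule, and substitute $\mathcal{P}^*=\kappa-\mathcal{R}$; the result collapses (no use of \eqref{rr} is needed at this stage) to
\begin{gather*}
\mathcal{R}\bra{\partial\bra{a\mathcal{R}(c)} + \partial\bra{\mathcal{R}(a)c}} - \partial\bra{\mathcal{R}(a)\mathcal{R}(c)} = \kappa\mathcal{R}\bra{\partial(ac)},
\end{gather*}
which is precisely the sum of \eqref{cond} evaluated at $(a,c)$ and at $(c,a)$. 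Since \eqref{cond} is already established, non-degeneracy then delivers the final identity. So the correct chain is: \eqref{RB} $\Leftrightarrow$ dual relation; dual relation plus \eqref{rr} $\Rightarrow$ \eqref{cond}; \eqref{cond} plus the pairing argument $\Rightarrow$ \eqref{spb}. Your proposal stalls at the last arrow because it dualizes the wrong identity.
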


For proof see Proposition~3.3 and Lemma~5.3 in~\cite{Sz}\footnote{Notice the misprint in formula~(5.9) in~\cite{Sz}, there is missing minus sign on the right-hand side of the equality.}.
In the work~\cite{Sz} it is shown that the Rota--Baxter identity~\eqref{RB} and the equation~\eqref{cond} can be signif\/icant in the construction of Frobenius manifolds on cotangent bundles inherent to the integrable systems of hydrodynamic type.

Summarizing, the Rota--Baxter identity \eqref{RB} coincides with the condition \eqref{spa} from Theo\-rem~\ref{thm}, the suf\/f\/icient condition~\eqref{cond} on $\mathcal{R}$ to satisfy the modif\/ied Yang--Baxter equation \eqref{cond2} coincides with~\eqref{spd}, and if the relation~\eqref{rr} or~\eqref{rel} and some additional natural assumptions are satisf\/ied, then the remaining conditions~\eqref{spb} and~\eqref{spc}
hold automatically.

\section{Application to the algebra of Laurent series}\label{sec4}

Consider the algebra of Laurent polynomials $\mathbb{A} := \mathbb{C}[p,p^{-1}]$, it is commutative and associative.
When necessary the algebra $\mathbb{A}$ can be extended to the algebra of formal Laurent series at~$\infty$: $\mathbb{A}^\infty:=\mathbb{C}\dbra{p^{-1}}$ or the algebra of formal Laurent series at~$0$: $\mathbb{A}^0:=\mathbb{C}\dbra{p}$.

 Consider
decomposition of $\mathbb{A}$ in the form
\begin{gather*}
		\mathbb{A} = \mathbb{A}_{\geqslant l}\oplus \mathbb{A}_{<l},\qquad A_{\geqslant l} := p^l \mathbb{C}[p],\qquad A_{<l} := p^{l-1} \mathbb{C}\big[p^{-1}\big].
\end{gather*}
The related projections are def\/ined by
\begin{gather*}
		\bbrac{\sum_i a_ip^i}_{\geqslant l}:= \sum_{i\geqslant l} a_ip^i\qquad\text{and}\qquad
		\bbrac{\sum_i a_ip^i}_{<l}:= \sum_{i<l}a_ip^i.
\end{gather*}
The subsets $\mathbb{A}_{\geqslant l}$ and $\mathbb{A}_{<l}$ are subalgebras only for $l=0$ or $l=1$.
As result, the projections on these subalgebras solve the Rota--Baxter identity~\eqref{RB} or~\eqref{spa} with the weight $\kappa=1$,
that is for
\begin{gather}\label{p}
		\mathcal{P} = [\cdot]_{\geqslant l}\qquad\text{or}\qquad \mathcal{P}=[\cdot]_{<l}\quad\text{if}\quad l=0,1.
\end{gather}
Remember that $[\cdot]_{\geqslant l} + [\cdot]_{<l} = 1$.

We will look now for solutions of the identity \eqref{cond}, where we take the derivative:
\begin{gather*}
		\partial:= p^r \partial_p,\qquad r\in\mathbb{Z}.
\end{gather*}
Then, projections
\begin{gather}\label{r}
		\mathcal{R} = [\cdot]_{\geqslant k-r}\qquad\text{or}\qquad \mathcal{R}=[\cdot]_{<k-r}
\end{gather}
solve the identity \eqref{cond} or \eqref{spd} with $\kappa=1$ if
\begin{enumerate}\itemsep=0pt
\item[1)] $r=0$ and $k=0$;
\item[2)] $r\in\mathbb{Z}$ and $k=1$ or $k=2$;
\item[3)] $r=2$ and $k=3$.
\end{enumerate}
Notice that the above solutions coincide with the $r$-matrices from~\cite{BS1,SB} with respect to the Poisson bracket def\/ined by
\begin{gather}\label{pbr}
	\pobr{,}_r:= p^r\partial_p\wedge\partial_x .
\end{gather}

\begin{proposition}\label{pp}\quad
\begin{itemize}\itemsep=0pt
\item All combinations of the above operators $\mathcal{P}$ and $\mathcal{R}$, \eqref{p} and \eqref{r}, that satisfy the identities~\eqref{spa}
and~\eqref{spc} also satisfy the remaining identities~\eqref{spb} and~\eqref{spd}.
\item For
\begin{subequations}\label{sol}
\begin{gather}\label{sol1}
		\mathcal{P} = [\cdot]_{\geqslant l},\qquad \mathcal{R} = [\cdot]_{\geqslant k-r}\qquad\text{and}\qquad \partial = p^r\partial_p
\end{gather}
or
\begin{gather}\label{sol2}
		\mathcal{P} = [\cdot]_{< l},\qquad \mathcal{R} = [\cdot]_{< k-r}\qquad\text{and}\qquad \partial = p^r\partial_p
\end{gather}
\end{subequations}
in addition to the identities~\eqref{sp} the constraint~\eqref{rel} is also satisfied if
\begin{enumerate}\itemsep=0pt
\item[$1)$] $k=0$: $l=0$ and $r=0$;
\item[$2)$] $k=1$: $l=0$ and $r\in\mathbb{Z}$ or $l=1$ and $r=1$;
\item[$3)$] $k=2$: $l=1$ and $r\in\mathbb{Z}$ or $l=0$ and $r=1$;
\item[$4)$] $k=3$: $l=1$ and $r=2$.
\end{enumerate}
\item The respective hierarchies \eqref{hier} for the solutions~\eqref{sol} turn out to be independent of~$r$.
That is, for~\eqref{sol1} we get
\begin{subequations}\label{hierab}
\begin{gather}\label{hiera}
	\Psi_{t^\infty_n} = \bbbrac{\frac{(L^n)_p}{\pobr{L,M}_0}}_{\geqslant l}\Psi_x -
	\bbbrac{\frac{(L^n)_x}{\pobr{L,M}_0}}_{\geqslant k}\Psi_p,
\end{gather}
where $\Psi = (L, M)^{\rm T}$ for the Lax function $L\in\mathbb{A}^\infty$ and $M$ being associated Orlov function.
For \eqref{sol2} we get
\begin{gather}\label{hierb}
	\Psi_{t^0_n} = \bbbrac{\frac{(\tilde{L}^n)_p}{\pobr{\tilde{L},\tilde{M}}_0}}_{< l}\Psi_x -
	\bbbrac{\frac{(\tilde{L}^n)_x}{\pobr{\tilde{L},\tilde{M}}_0}}_{< k}\Psi_p,
\end{gather}
\end{subequations}
where $\Psi = (\tilde{L}, \tilde{M})^{\rm T}$ for the Lax function $\tilde{L}\in\mathbb{A}^0$ and $\tilde{M}$ the associated Orlov function.
\item The solutions \eqref{sol1} and \eqref{sol2} as well as the Lax hierarchies~\eqref{hiera} and~\eqref{hierb}
are mutually equivalent through the transformation:
\begin{gather}\label{trans}
p'= p^{-1}\qquad\text{with}\quad l'=1-l,\quad k'=3-k,\quad r'=2-r.
\end{gather}
\end{itemize}
\end{proposition}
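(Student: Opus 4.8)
The plan is to verify each of the four assertions in Proposition~\ref{pp} in turn, treating the final equivalence claim~\eqref{trans} as the organizing principle that reduces half of the work. First I would address the second and fourth bullets together, since the equivalence~\eqref{trans} is the true engine. Under the substitution $p' = p^{-1}$ one computes directly that $\partial_{p'} = -p^2 \partial_p$, so that $\partial = p^r \partial_p$ transforms into $p'^{r'} \partial_{p'}$ with $r' = 2 - r$; this is the source of the rule $r' = 2-r$. Simultaneously, the decomposition $\mathbb{A} = \mathbb{A}_{\geqslant l}\oplus\mathbb{A}_{<l}$ is carried to the complementary decomposition after inverting $p$, which yields $l' = 1 - l$ for the $\mathcal{P}$-projection and $k' = 3 - k$ for the $\mathcal{R}$-projection (the shift by the index $r$ inside $\mathcal{R} = [\cdot]_{\geqslant k-r}$ accounts for the asymmetry between the two rules). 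Once this change of variables is established, the solution lists~\eqref{sol1} and~\eqref{sol2} map into one another, and so I need only verify the $[\cdot]_{\geqslant}$ case~\eqref{sol1} explicitly and invoke the transformation to obtain the $[\cdot]_{<}$ case~\eqref{sol2}.

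For the second bullet, I would check the constraint~\eqref{rel}, namely $\mathcal{P}(\partial a) = \partial \mathcal{R}(a)$, by direct degree counting on monomials. With $\partial = p^r\partial_p$, one has $\partial(p^i) = i\, p^{i+r-1}$, so the condition $[\partial a]_{\geqslant l} = \partial [a]_{\geqslant k-r}$ becomes a statement comparing which powers survive each projection. Applying both sides to $p^i$ and demanding agreement for every $i$ pins down the relation between $l$, $k$, and $r$; the boundary monomials (those near the cut-off degrees) are where the constraint is genuinely tested, and the case-by-case list for $k=0,1,2,3$ falls out of keeping track of these boundary terms. The identities~\eqref{sp} themselves hold by the Rota--Baxter and modified-Yang--Baxter structure already recorded in~\eqref{p} and~\eqref{r}, so the only fresh content here is that \eqref{rel} is simultaneously compatible, which is exactly the arithmetic I have just described.

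The first bullet is then essentially a corollary of the Proposition preceding Section~\ref{sec4}: since each $\mathcal{P}$ is a projection onto a subalgebra satisfying~\eqref{RB} with $\kappa_1 = 1$, and each $\mathcal{R}$ solves~\eqref{cond} with $\kappa_2 = 1$, the final remark after that Proposition guarantees that once~\eqref{rel} (equivalently~\eqref{rr}) holds together with the commutation of the projections with $\partial_x$, the cross-identities~\eqref{spb} and~\eqref{spc} are automatic. I would simply note that the projections $[\cdot]_{\geqslant l}$ and $[\cdot]_{< l}$ manifestly commute with $\partial_x = \partial_p$-independent differentiation and invoke that abstract result.

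Finally, for the third bullet I would substitute the explicit operators into~\eqref{hier} via the generating functions~\eqref{genf}. Using $J = \pobr{L,M} = p^r\{L,M\}_0/p^r$ one rewrites $J^{-1}\partial X_n = p^r (L^n)_p / (p^r\{L,M\}_0)$, and the factor $p^r$ from $\partial$ cancels against the $p^r$ in the Poisson bracket $\pobr{,}_r = p^r\partial_p\wedge\partial_x$, leaving the $r$-independent expression $(L^n)_p/\{L,M\}_0$ to which $\mathcal{P} = [\cdot]_{\geqslant l}$ is applied. The analogous cancellation in $B_n = \mathcal{R}(J^{-1}(X_n)_x)$ produces the second term of~\eqref{hiera}, and the shift absorbing $r$ into the index of $\mathcal{R}$ is precisely what turns $[\cdot]_{\geqslant k-r}$ into $[\cdot]_{\geqslant k}$. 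The main obstacle, and the step requiring the most care, is keeping the bookkeeping of the index shifts consistent across all three places they appear—in the definition of $\mathcal{R}$, in the derivation $\partial = p^r\partial_p$, and in the transformation rule $k' = 3-k$—so that the claimed $r$-independence and the mutual equivalence genuinely line up rather than merely appearing to.
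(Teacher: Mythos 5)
Your handling of the second, third and fourth bullets is essentially correct and coincides with the paper's proof: the paper disposes of bullets one, two and four by ``straightforward/simple verification'' (which your monomial degree-counting and your computation $\partial_{p'}=-p^2\partial_p$, $l'=1-l$, $k'=3-k$, $r'=2-r$ make explicit), and proves the third bullet by exactly your $p^{\pm r}$-cancellation in $A_n$ and $B_n$. The genuine gap is in your first bullet. That bullet quantifies over \emph{all} combinations of the operators \eqref{p} and \eqref{r} satisfying \eqref{spa} and \eqref{spc}; it does not presuppose \eqref{rel}. Your argument instead invokes the closing remark of Section~\ref{sec3}, whose hypotheses are the relation \eqref{rr} (i.e.\ \eqref{rel}) \emph{together with} the Frobenius duality $\mathcal{R}=\kappa-\mathcal{P}^*$ for a nondegenerate $\partial$-invariant pairing. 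Neither is available in the required generality. First, \eqref{rel} holds only on the list of the second bullet, so a combination such as $\mathcal{P}=[\cdot]_{\geqslant 0}$, $\mathcal{R}=[\cdot]_{\geqslant 2}$, $\partial=\partial_p$ (i.e.\ $l=0$, $k=2$, $r=0$, allowed by \eqref{p} and \eqref{r} and satisfying \eqref{spa}, \eqref{spc}) is outside the reach of your argument, even though \eqref{spb} does hold for it and the first bullet asserts as much. Second, even where \eqref{rel} holds, $\partial$-invariance forces the pairing $(a,b)=\mathrm{res}\bigl(p^{-r}ab\bigr)$, for which $\kappa-\mathcal{P}^*=[\cdot]_{\geqslant r-l}$; this equals $\mathcal{R}=[\cdot]_{\geqslant k-r}$ only when $k=2r-l$, and the main family $k=l+1$ with arbitrary integer $r$ --- in particular the Manakov--Santini case $l=0$, $k=1$ --- would need $r=l+\tfrac12$, which is impossible. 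So for precisely the cases of main interest the abstract proposition of Section~\ref{sec3} does not apply, and \eqref{spb} must be checked directly on monomials, which is what the paper's ``straightforward verification'' amounts to.

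There are two further points you should repair. One is a directional mismatch: the Section~\ref{sec3} result produces \eqref{spb} and \eqref{spc} as \emph{conclusions} from \eqref{spa}, \eqref{rr} and the duality, whereas the first bullet takes \eqref{spa} and \eqref{spc} as \emph{hypotheses} and asks for \eqref{spb} and \eqref{spd} (the latter is harmless, since the operators \eqref{r} are by construction solutions of \eqref{cond}, which is \eqref{spd}). The other is that the boundary arithmetic genuinely matters and cannot be routed around: a direct check on $a=p^i$, $b=p^j$ shows that for $\mathcal{P}=[\cdot]_{\geqslant l}$, $\mathcal{R}=[\cdot]_{\geqslant k-r}$ the identity \eqref{spb} holds exactly when $0\leqslant k\leqslant 2$ (if $l=0$) or $1\leqslant k\leqslant 3$ (if $l=1$); for instance with $l=0$ and $(r,k)=(2,3)$ one has \eqref{spa} and \eqref{spc} but \eqref{spb} fails at $a=1$, $b=p^{-1}$, since the left-hand side vanishes while $\mathcal{P}(a\partial b)=-1$. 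Any correct proof of the first bullet must therefore engage with these boundary cases explicitly (and, as this example shows, delimit the combinations to which the claim applies); an appeal to the abstract Rota--Baxter/Frobenius machinery cannot see this fine structure.
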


\begin{proof}
The f\/irst two points are consequence of straightforward verif\/ication. To show the third point consider the solutions
\eqref{sol1} and let $L\in\mathbb{A}^\infty$. Then, the generating functions~\eqref{genf} are
\begin{gather*}
	A_n = \bbbrac{\frac{p^r(L^n)_p}{\pobr{L,M}_r}}_{\geqslant l} = \bbbrac{\frac{(L^n)_p}{\pobr{L,M}_0}}_{\geqslant l}
\end{gather*}
and
\begin{gather*}
 B_n = \bbbrac{\frac{(L^n)_x}{\pobr{L,M}_r}}_{\geqslant k-r} = p^{-r} \bbbrac{\frac{(L^n)_x}{\pobr{L,M}_0}}_{\geqslant k},
\end{gather*}
where the Poisson bracket is def\/ined by~\eqref{pbr}. Hence, the related hierarchy~\eqref{hier} takes the form~\eqref{hiera},
since the factors~$p^{-r}$ and~$p^r$ cancel out. For the solutions~\eqref{sol2} and $\tilde{L}\in\mathbb{A}^0$ the reasoning is similar. The last point follows from simple verif\/ication.
\end{proof}

For f\/ixed parameters $k,r$ and $l$ we can consider the Lax functions $L$ and $\tilde{L}$ to be analytic extensions
of some function outside and inside, respectively, unit circle on the complex plane. Thus $L$ and $\tilde{L}$
are def\/ined near $\infty$ and $0$, respectively, where they have poles. We can choose these poles to be simple.
As result we can consider two families of related hierarchies together and postulate $\Psi = (L,M,\tilde{L}, \tilde{M})^{\rm T}$
in~\eqref{hierab}. This is, inter alia, consequence of the fact that the solutions~\eqref{sol1} and~\eqref{sol2} are mutually associated through the decomposition of unity:
\begin{gather*}
1 = [\cdot]_{\geqslant l} + [\cdot]_{<l} = [\cdot]_{\geqslant k-r} + [\cdot]_{<k-r} .
\end{gather*}
The compatibility equations \eqref{zc} for the hierarchies \eqref{hierab} takes the following, independent of~$r$, form:
\begin{subequations}\label{ZC}
\begin{gather}\label{ZC1}
		\bra{A_n^\lambda}_{t_m^\mu} - \bra{A_m^\mu}_{t_n^\lambda} + \ddual{A_n^\lambda, A_m^\mu}_x
		+ B_m^\mu \bra{A_n^\lambda}_p - B_n^\lambda \bra{A_m^\mu}_p = 0
\end{gather}
and
\begin{gather}\label{ZC2}
\bra{B_n^\lambda}_{t_m^\mu} - \bra{B_m^\mu}_{t_n^\lambda} + A_n^\lambda \bra{B_m^\mu}_x - A_m^\mu \bra{B_n^\lambda}_x
- \ddual{B_n^\lambda, B_m^\mu}_p = 0,
\end{gather}
\end{subequations}
where $m,n\in\mathbb{N}$; $\mu,\lambda=\infty,0$ and $\dual{f, g}_\xi := fg_\xi - gf_\xi$ for $\xi=x,p$. The related generating functions are def\/ined as
\begin{gather*}
	A_n^\infty := \bbbrac{\frac{(L^n)_p}{\pobr{L,M}_0}}_{\geqslant l},\qquad
	B_n^\infty:=\bbbrac{\frac{(L^n)_x}{\pobr{L,M}_0}}_{\geqslant k} ,
\end{gather*}
and
\begin{gather*}
	A_n^0 := \bbbrac{\frac{(\tilde{L}^n)_p}{\pobr{\tilde{L},\tilde{M}}_0}}_{< l},\qquad
	B_n^0:=\bbbrac{\frac{(\tilde{L}^n)_x}{\pobr{\tilde{L},\tilde{M}}_0}}_{< k} .
\end{gather*}
Notice that if we treat $\Psi$ as a common eigenfunction, then the pairs of equations from~\eqref{hierab}
give the Lax pairs for the respective systems from~\eqref{ZC}.

Let us consider reductions from Lemma~\ref{lemma}. First for constraints of the type~\eqref{con}
we take
\begin{gather}\label{reduction}
	\pobr{L,M}_r = L^r\qquad\text{and}\qquad \pobr{\tilde{L},\tilde{M}}_r = \tilde{L}^{2-r} .	
\end{gather}
We assume that they are consistent with respect to the appropriate choice of Lax and Orlov functions, see the forthcoming examples.
Then, in accordance with equations~\eqref{con} and~\eqref{shier} the hierarchies \eqref{hierab} reduce to
\begin{gather*}
	L_{t^\infty_n} = \frac{n}{n-r}\ppobr{[L^{n-r}]_{\geqslant k-r},L}_r\qquad\text{and}\qquad
	\tilde{L}_{t^0_n} = \frac{n}{n-2+r}\ppobr{[\tilde{L}^{n+2-r}]_{< k-r},\tilde{L}}_r .
\end{gather*}
So we obtain the standard Poisson hierarchies of dispersionless systems, see~\cite{BS1,SB} and references therein.
For the next reduction from Lemma~\ref{lemma} we postulate that
\begin{gather*}
		L = p\qquad\text{and}\qquad \tilde{L} = p^{-1} .
\end{gather*}
Then, the hierarchies \eqref{hierab} reduce to
\begin{gather*}
M_{t^\infty_n} = n\brac{p^{n-1}M_x^{-1}}_{\geqslant l}M_x\qquad\text{and}\qquad
\tilde{M}_{t^0_n} = n\brac{p^{1-n}\tilde{M}_x^{-1}}_{< l}\tilde{M}_x .
\end{gather*}
Taking $G:= M_x^{-1}$ and $\tilde{G}:= \tilde{M}_x^{-1}$ the above hierarchies can be rewritten
in the form
\begin{gather*}
G_{t^\infty_n} = n\ddual{\brac{p^{n-1}G}_{\geqslant l},G}_x\qquad\text{and}\qquad
\tilde{G}_{t^0_n} = n\ddual{\brac{p^{1-n}\tilde{G}_x}_{< l},\tilde{G}}_x .
\end{gather*}
These are the universal hierarchies considered in \cite{MAS1,MAS2,MAS3}.

For appropriate choice of Lax functions and associated Orlov functions the Lax hierarchies~\eqref{hierab}, in principle, yield construction of $(1+1)$-dimensional integrable inf\/inite-f\/ield (chain) systems, while the compatibility conditions \eqref{ZC} provide
f\/inite-f\/ield systems that include $(2+1)$-dimensional integrable equations of Manakov--Santini type, which are of our interest. Although it would be fairly easy, we will not consider in this work f\/inite-f\/ield reductions of Lax hierarchies~\eqref{hierab}.

\subsection[The case of $k=l=r=0$]{The case of $\boldsymbol{k=l=r=0}$}

The Lax function \cite{B, TT1,TT2} has the form
\begin{gather*}
	L = p + u(x) p^{-1} + u_2(x) p^{-2} + u_3(x) p^{-3} + \cdots\in\mathbb{A}^\infty
\end{gather*}
and the associated Orlov function \cite{TT1,TT2} is
\begin{gather*}
	M = M_0 + x + v(x) L^{-1} + v_2(x) L^{-2} + \cdots ,
\end{gather*}
where $M_0$ is the part of $M$ that explicitly depends on times $t^\infty_n$ and commutes with $L$,
 that is $\pobr{M_0,L}_0\equiv 0$. This means that the choice of $M_0$ does not inf\/luence the construction of
 related systems from~\eqref{ZC}. In this case, consistent Lax hierarchy for $\tilde{L}\in\mathbb{A}^0$ does not exist.

Then, we calculate
\begin{gather*}
\pobr{L,M}_0 = 1+ v_x p^{-1} + \bra{(v_2)_x - u} p^{-2} + \cdots
\end{gather*}
and the f\/irst generating functions
\begin{alignat*}{4}
 &A^\infty_1 = 1,\qquad && A^\infty_2 = p-v_x,\qquad && A^\infty_3 = p^2 -v_x p + 2 u+v_x^2 - (v_2)_x,&\\
 &B^\infty_1 = 0,\qquad && B^\infty_2 = u_x,\qquad && B^\infty_3 = u_x p + (u_2)_{x} - u_x v_x .&
\end{alignat*}
From the generalized zero-curvature equations \eqref{ZC} for $n=1$, $m=2$ and $\lambda=\mu=\infty$ we get
\begin{gather*}
	u_{t_1} = u_x,\qquad v_{t_1} = v_x ,
\end{gather*}
where $t_1\equiv t^\infty_1$. For $n=1$, $m=3$ we get
\begin{gather*}
	(u_1)_{t_1} = (u_1)_x,\qquad (v_2)_{t_1} = (v_2)_x .
\end{gather*}
Which in fact means that we can identify the time $t_1$ with~$x$. First nontrivial equations
are for~$n=2$ and $m=3$. Let $t\equiv t_3$ and $y\equiv t_2$. Hence, we obtain two compatibility conditions:
\begin{gather*}
	(v_2)_x = u + v_y + v_x^2,\qquad
	(u_2)_x = u_y + u_x v_x
\end{gather*}
and the celebrated Manakov--Santini system \cite{MS2}
\begin{gather}
\begin{split}
& v_{xt} = v_{yy}+v_x v_{xy}+u v_{xx}-v_y v_{xx}, \\
& u_{xt} = u_{yy}+u_x^2+ u_{xy}v_x+u u_{xx} - u_{xx}v_y .
\end{split}\label{MS}
\end{gather}
The related Lax pair which follows from \eqref{hiera} is
\begin{gather*}
\partial_y\Psi = \bbrac{(p - v_x)\partial_x - u_x \partial_p} \Psi ,\\
\partial_t \Psi = \bbrac{\bra{p^2 -v_x p + u - v_y}\partial_x - \bra{u_x p + u_y}\partial_p}\Psi .
\end{gather*}

The reduction given by the condition $\pobr{L,M}_0 = 1$
means that $v=0$. Thus, for $v=0$ from~\eqref{MS} we obtain the dispersionless KP equation
\begin{gather*}
	u_{xt} = u_{yy}+u_x^2 + u u_{xx} .
\end{gather*}
The second possible reduction is given by the constraint: $L = p$, from which it follows
that $u = 0$ and the system \eqref{MS} reduces to the Pavlov equation~\cite{P} (see also~\cite{D,MAS2,SS})
\begin{gather*}
v_{xt} = v_{yy}+v_x v_{xy}-v_y v_{xx} .
\end{gather*}

\subsection[The case: $k=1$, $l=0$ and $r\in\mathbb{Z}$]{The case: $\boldsymbol{k=1}$, $\boldsymbol{l=0}$ and $\boldsymbol{r\in\mathbb{Z}}$}

The Lax and associated Orlov functions \cite{B,M1,M2,TT2} for $p\rightarrow \infty$ are given by
\begin{gather*}
	L = p + u(x) + u_1(x) p^{-1} + u_2(x) p^{-2} + \cdots\in\mathbb{A}^\infty
\end{gather*}
and
\begin{gather*}
	M = M_0 + x +\bra{\partial_x^{-1}u(x)+w(x)} L^{-1} + w_2(x) L^{-2} + \cdots ,
\end{gather*}
where as before $\pobr{M_0,L}_0\equiv0$ and we made some modif\/ication for convenience of further calculations.
Then
\begin{gather*}
\pobr{L,M}_0 = 1+ (u+w_x) p^{-1} +
\bra{(w_2)_x - u_1 - u^2 - uw_x} p^{-2} + \cdots
\end{gather*}
and
\begin{alignat*}{4}
 &A^\infty_1 = 1,\qquad && A^\infty_2 = p-w_x,\qquad && A^\infty_3 = p^2 + (u-w_x) p + u^2 +u w_x+w_x^2+2 u_1 - (w_2)_x,& \\
 &B^\infty_1 = 0,\qquad && B^\infty_2 = u_x p,\qquad && B^\infty_3 = u_x p^2 + \bra{u u_x - u_xw_x+(u_1)_x} p .&
\end{alignat*}
From the generalized zero-curvature equations \eqref{ZC} for $n=1$, $m=2$ and $\lambda=\mu=\infty$ we get
\begin{gather*}
	u_{t_1} = u_x,\qquad w_{t_1} = w_x ,
\end{gather*}
where $t_1\equiv t^\infty_1$. For $n=1$, $m=3$ we get
\begin{gather*}
	(u_1)_{t_1} = (u_1)_x,\qquad (w_2)_{t_1} = (w_2)_x .
\end{gather*}
Which means that we can identify the time $t_1$ with $x$. First nontrivial equations are for $n=2$, $m=3$. Let $t\equiv t^\infty_3$ and $y\equiv t^\infty_2$.
After some simplif\/ications, we obtain two compatibility conditions
\begin{gather*}
	(u_1)_x = u_y + u_x w_x ,\qquad
	(w_2)_{xx} = 2u_y + w_{xy} + 2uu_x + 3u_xw_x + 2uw_{xx} + 2w_xw_{xx}
\end{gather*}
and the new $(2+1)$-dimensional integrable system
\begin{gather}
\begin{split}
& u_{xt} = u_{yy}+u_xu_y + u_x^2w_x + u u_{xy} + u_{xy}w_x+ u_{xx} a , \\
& w_{xt} = u w_{xy} + u_yw_x + w_x w_{xy}+ a w_{xx} - a_y ,
\end{split}\label{MS2}
\end{gather}
where
\begin{gather*}
a_x = u_xw_x - w_{xy} .
\end{gather*}
The Lax pair for the system \eqref{MS2} is given by
\begin{gather*}
\partial_y\Psi = \brac{(p-w_x)\partial_x - u_x p \partial_p} \Psi ,\\
\partial_t \Psi = \brac{\bra{p^2+(u-w_x) p - w_y-\partial_x^{-1}uw_{xx}}\partial_x - \bra{u_x p^2 + (uu_x + u_y) p}\partial_p}\Psi .
\end{gather*}

Consider f\/irst reduction \eqref{reduction} given by the condition
\begin{gather}\label{red1}
	\pobr{L,M}_r = L^r,\qquad r\in\mathbb{Z},	
\end{gather}
which is consistent since the order at $\infty$ of both sides of the equality is the same.
Hence,
\begin{gather*}
\pobr{L,M}_0 = p^{-r}L^r = 1+ r u p^{-1} + \bbra{r u_1-\frac{1}{2} r (1-r)u^2} p^{-2}+\cdots,
\end{gather*}
and we obtain the constraint
\begin{gather*}
		w_x = (r-1)u
\end{gather*}
by means of which \eqref{MS2} gives the $r$-th dispersionless modif\/ied KP equation~\cite{B,BS2}:
\begin{gather*}
	u_t = \frac{1}{2}(r-1)u^2u_x + r uu_y + \partial_x^{-1}u_{yy} + (1-r)u_x\partial_x^{-1}u_y ,
\end{gather*}
which for $r=0$ gives the standard dispersionless modif\/ied KP equation.
The second reduction is given by the constraint: $L = p$, from which it follows
that $u = 0$ and in this case the system~\eqref{MS2} reduces again to the Pavlov equation
\begin{gather*}
w_{xt} = w_{yy} + w_x w_{xy} - w_y w_{xx} .
\end{gather*}

The Lax and the associated Orlov functions for $p\rightarrow 0$ are
\begin{gather*}
		\tilde{L} = v(x) p^{-1} + v_0(x) + v_1(x) p + v_2(x) p^2+ \cdots\in\mathbb{A}^0
\end{gather*}
and
\begin{gather*}
	\tilde{M} = \tilde{M}_0 + m(x) + m_1(x) \tilde{L}^{-1} + m_2(x) \tilde{L}^{-2} + \cdots ,
\end{gather*}
where $\pobr{\tilde{M}_0,\tilde{L}}_0\equiv 0$. We have
\begin{gather*}
	\pobr{\tilde{L},\tilde{M}}_0 = - vm_x p^{-2} - (m_1)_x p^{-1} + \cdots
\end{gather*}
and
\begin{alignat*}{4}
&A^0_1 = 0,\qquad && A^0_2 = \frac{v}{m_x} p^{-1},\qquad&& A^0_3 =\frac{v^2}{m_x} p^{-2} + \frac{2 v v_0 m_x - v (m_1)_x}{m_x^2} p^{-1},& \\
&B^0_1 = 0,\qquad&& B^0_2 =-\frac{v_x}{m_x},\qquad && B^0_3 =-\frac{v v_x}{m_x} p^{-1} + \bbra{\frac{(m_1)_x}{m_x^2}-\frac{2 v_0}{m_x}} v_x-\frac{v (v_0)_x}{m_x} .&
\end{alignat*}
From the generalized zero-curvature equations \eqref{ZC} for $n=2$, $m=3$ and $\lambda=\mu=0$ we get the constraints
\begin{gather*}
		(v_0)_x = \frac{v_z m_x}{v},\qquad
		\bbra{\frac{(m_1)_x}{m_x}}_x = m_{xz}+\frac{v_z}{v}m_x,
\end{gather*}
where $z \equiv t^0_2$, and the system
\begin{gather}\label{MS3}
	\bbra{\frac{v}{m_x}}_\tau = - \bbrac{\frac{v}{m_x}\partial_x^{-1}v\bbra{\frac{m_x}{v}}_{z}}_z,\qquad
	\bbra{\frac{v_x}{m_x}}_\tau = v_{zz} - \bbrac{\frac{v_x}{m_x}\partial_x^{-1}v\bbra{\frac{m_x}{v}}_{z}}_z,
\end{gather}
where $\tau \equiv t^0_3$. The Lax pair for \eqref{MS3} is given by
\begin{gather*}
	\partial_\tau\Psi = \bbrac{\frac{v}{m_x}p^{-1}\partial_x + \frac{v_x}{m_x}\partial_p}\Psi,\\
	\partial_z\Psi = \bbbrac{\bbra{\frac{v^2}{m_x}p^{-2} - \frac{v}{m_x}\partial_x^{-1}v\bbra{\frac{m_x}{v}}_{\!z} p^{-1}}\partial_x
	+\bbra{\frac{vv_x}{m_x}p^{-1}+v_z-\frac{v_x}{m_x}\partial_x^{-1}v\bbra{\frac{m_x}{v}}_{\!z}}\partial_p}\Psi .
\end{gather*}

Consider f\/irst reduction given by the condition
\begin{gather}\label{red2}
	\pobr{\tilde{L},\tilde{M}}_r = \tilde{L}^{2-r},\qquad r\in\mathbb{Z},	
\end{gather}
which is consistent since order at $0$ of both sides is the same. Thus,
\begin{gather*}
\pobr{\tilde{L},\tilde{M}}_0 = p^{-r}\tilde{L}^{2-r} = v^{2-r} p^{-2} + (2-r) v^{1-r} v_0 p^{-1}+\cdots
\end{gather*}
and we obtain the constraint
\begin{gather*}
		m_x = -v^{1-r},
\end{gather*}
by means of which \eqref{MS3} gives the $r$-th dispersionless Harry--Dym equation \cite{B,BS2}:
\begin{gather*}
	v_\tau = - v^{1-r}\brac{v^r\partial_x^{-1}v^{-r}v_z}_z .
\end{gather*}

The second reduction is given by the constraint: $\tilde{L} = p^{-1}$, from which it follows
that $v = 1$ and the system \eqref{MS3} reduces to \cite{MAS2,SS}
\begin{gather*}
		m_{x\tau} = m_x m_{zz} - m_{xz}m_z .
\end{gather*}

Now, we will consider the mixed case. For $\lambda=\infty$, $n=1$ and $\mu=0$ with $m=2$ or $m=3$ in~\eqref{ZC} we obtain
\begin{gather*}
		v_{t_1} =v_x,\qquad m_{t_1} = m_x,
\end{gather*}
which holds automatically as $t_1\equiv x$.
From the zero-curvature equations~\eqref{ZC}, for $\lambda=\infty$, $n=2$ and $\mu=0$, $m=2$, we obtain
the following compatibility equations
\begin{gather*}
		u_z = \frac{v_x}{m_x},\qquad w_{xz} = -v\frac{m_{xx}}{m_x^2}
\end{gather*}
and
\begin{gather*}
		 \bbra{\frac{v}{m_x}}_y = \frac{v}{m_x}(u_x+w_{xx}) - w_x\bbra{\frac{v}{m_x}}_x,\qquad
		 \bbra{\frac{v_x}{m_x}}_y	= \frac{(u_xv)_x}{m_x} - w_x\bbra{\frac{v_x}{m_x}}_x .	
\end{gather*}
The above equations can be equivalently rewritten in the following form,
for which we have the conditions
\begin{gather*}
	u_x = (\log v)_y - (\log v)_x \frac{m_y}{m_x},\qquad w_x = -\frac{m_y}{m_x}
\end{gather*}
and the system
\begin{gather}\label{mst}
	m_{xx}v = m_xm_{yz} - m_ym_{xz},\qquad
	v_{xx} = (\log v)_{yz} m_x - (\log v)_{xz} m_y ,
\end{gather}
which is a system of Manakov--Santini type recently obtained in~\cite{B3}.
The related Lax pair is
\begin{gather*}
\partial_y \Psi = \bbrac{(p+w)\partial_x - u_xp\partial_p}\Psi,\qquad
\partial_z \Psi = \bbrac{\frac{v}{m_x}p^{-1}\partial_x +\frac{v_x}{m_x} \partial_p}\Psi .
\end{gather*}

Consider now the reduction by means of \eqref{red1} and \eqref{red2}, that is we have constraints
in the form
\begin{gather*}
	w_x = (r-1)u,\qquad m_x = - v^{1-r},\qquad r\in\mathbb{Z} .
\end{gather*}
Hence, from \eqref{mst} we obtain the $r$-th dispersionless Toda system \cite{BS2,M1}:
\begin{gather*}
u_z + v^{r-1}v_x = 0,\qquad v_y = u_x v + (1-r) uv_x ,
\end{gather*}
which for $r=1$ gives the Boyer--Finley equation being $(2+1)$-dimensional version of dispersionless Toda equation.
On the other hand from the reduction: $L=p$, $\tilde{L} = p^{-1}$ we have the constraints: $u=0$, $v=1$,
from which we get the equation \cite{MAS2, P}
\begin{gather*}
		m_{xx} = m_x m_{yz} - m_y m_{xz} .
\end{gather*}

The remaining not considered cases from Proposition~\ref{pp} are equivalent to the above examples through the transformation~\eqref{trans} or there is no consistent hierarchies for them on the level of equations~\eqref{hierab}.

\subsection*{Acknowledgement}

The very preliminary results related to this work were obtained in 2010 during the visit of the author to Institute of Mathematics in Academia Sinica in Taipei. The author wishes to express his thanks to Professor Jen-Hsu Chang for
valuable discussions and Professor Jyh-Hao Lee for warm hospitality.

\pdfbookmark[1]{References}{ref}
\LastPageEnding

\end{document}